\newcounter{theorem}
\renewcommand{\thetheorem}{\arabic{section}.\arabic{theorem}}
\newenvironment{thm}[1]{\par
\begin{sloppypar}\refstepcounter{theorem}%
\noindent{\bf #1 \thetheorem.}\it{}}{\end{sloppypar}}
\newenvironment{theorem}{\begin{thm}{Theorem}}{\end{thm}}
\newenvironment{proposition}{\begin{thm}{Proposition}}{\end{thm}}
\newenvironment{corollary}{\begin{thm}{Corollary}}{\end{thm}}
\newenvironment{defi}[1]{\par
\begin{sloppypar}\refstepcounter{theorem}%
\noindent{\bf #1 \thetheorem.}\rm{}}{\end{sloppypar}}
\newenvironment{definition}{\begin{defi}{Definition}}{\end{defi}}
\newenvironment{remark}{\begin{defi}{Remark}}{\end{defi}}
\def\R{{\rm I\kern-.2em R}}
\def\X{\mathcal X}
\def\V{\mathcal{V}}
\def\s0{\sigma_0}
\def\bb1{{\rm{1}\hspace{-3pt}\mathbf{l}}}
\def\Ie0{[-\epsilon_0,\epsilon_0]}
\def\beq{\begin{equation}}
\def\eeq{\end{equation}}
\title{A Schatten-von Neumann class criterion for the magnetic Weyl calculus}
\author{Nassim Athmouni\footnote{Facult\'{e} des Sciences de
Gafsa, Gafsa, Tunisie}, Radu Purice\footnote{Institute
of Mathematics Simion Stoilow of the Romanian Academy,Research unit nr. 1; P.O.  Box
1-764, Bucharest, RO-014700, Romania and
Laboratoire Europ\'{e}en Associ\'{e} CNRS {\it Math\'{e}matique et Mod\'{e}lisation}.}}
\begin{document}

\maketitle

\begin{abstract}
We prove a criterion for a
{\it 'magnetic' Weyl operator} (see \cite{MP1,IMP1}) to be trace-class by
extending a method developed by H. Cordes \cite{Cordes}, T. Kato \cite{Kato}
and G. Arsu \cite{Arsu-JOT}.  Using the Calderon-Vaillancourt type Theorem for magnetic Weyl operators and an interpolation argument we also give a  criterion for a
{\it 'magnetic' Weyl operator} (see \cite{MP1,IMP1}) to be in a Schatten-von Neumann class with $1<p<\infty$.
\end{abstract}

{\bf MSC:} 35S05,45P05, 47B10, 81Q10, 81S10, 81S30, 81V99

{\bf Keywords:} pseudodifferential operators, Schatten-von Neumann classes, integral operators, quantization, magnetic fields.

\section{Introduction}

The {\it 'magnetic' Weyl quantization} \cite{MP1} is proven in \cite{MP2} to be
a {\it strict deformation quantization} in the sense of Rieffel
\cite{Rif1,Rif2,Landsman} and its associated {\it 'magnetic' Weyl calculus} is
developed in \cite{MPR1,IMP1,IMP2} where a {\it magnetic version} of the
Calderon-Vaillancourt Theorem is proven. In this paper we prove a criterion for a
{\it 'magnetic' Weyl operator} to be trace-class by
extending a method developed by H. Cordes \cite{Cordes}, T. Kato \cite{Kato}
and G. Arsu \cite{Arsu-JOT}. Using the Calderon-Vaillancourt type Theorem for magnetic Weyl operators and an interpolation argument we also give a  criterion for a
{\it 'magnetic' Weyl operator} (see \cite{MP1,IMP1}) to be in a Schatten-von Neumann class with $1<p<\infty$. Our main results are formulated in Theorem \ref{main-Th} and Corollary \ref{main-C}.

Let us fix some general notations. Recall that for any $a\geq0$ we denote by
$[a]\in\mathbb{N}$ its integer part (i.e. the largest natural number less then
or equal to $a$). For any finite dimensional real vector
space $\mathcal{V}$, we shall denote by
$BC(\V)$ the space of bounded continuous complex functions on $\mathcal{V}$ with the
$\|\cdot\|_\infty$ norm, by $C^\infty(\mathcal{V})$ the space of smooth
functions on $\mathcal{V}$, by $C^\infty_{\text{\sf
pol}}(\mathcal{V})$ its subspace of smooth functions
that are polynomially bounded together with all their derivatives and by
$BC^\infty(\V)$ the subspace of smooth functions
that are bounded together with all their derivatives; we consider all these
spaces endowed with their usual locally
convex topologies (see \cite{LSchwartz}). We use the standard multi-index notation 
\cite{Hormander2}. 
We shall consider 
the space of Schwartz test functions $\mathscr{S}(\mathcal{V})$ endowed
with its Fr\'{e}chet topology and its dual $\mathscr{S}^\prime(\mathcal{V})$
and denote by $\langle\cdot,\cdot\rangle_{\mathcal{V}}$ the associated duality
map.
We denote by $\mathcal{T}^{\mathcal{V}}_v$ the translation with $-v\in\mathcal{V}$ (acting on the
space of tempered distributions). We shall also consider the usual Sobolev spaces 
\beq
W^{m,p}(\mathcal{V}):=\left\{f\in L^p(\mathcal{V})\,\mid\,\partial^\alpha f\in L^p(\mathcal{V}),\,\forall |\alpha|\leq m\right\}
\eeq
(with $m\in\mathbb{N}$ and $1\leq p\leq\infty$) with the associated Banach space structure and by interpolation and duality also the spaces $W^{s,p}(\mathcal{V})$ for any $s\in\mathbb{R}$ \cite{B-L,T}. We shall denote by $\mathcal{H}^s:=W^{s,2}$; these are Hilbert spaces for any $s\in\mathbb{R}$.
For any vector $v\in\mathcal{V}$ we denote by $<v>:=\sqrt{1+|v|^2}$. We denote
the convolution operation by
\beq
(f*g)(v)\ :=\
\int_{\mathcal{V}}f(v-u)g(u)\,du,\qquad\forall(f,g)\in\mathscr{S}(\mathcal{V}
)\times\mathscr{S}(\mathcal{V})
\eeq
and also its possible extensions to larger spaces of distributions on
$\mathcal{V}$. For two linear topological spaces $\mathcal{L}_1$ and
$\mathcal{L}_2$ we shall denote by $\mathbb{B}(\mathcal{L}_1,\mathcal{L}_2)$
the linear space of continuous linear operators from $\mathcal{L}_1$ to
$\mathcal{L}_2$, endowed with the bounded convergence topology \cite{B}.

We shall work on the configuration space $\X:=\mathbb{R}^d$ and consider its
dual $\X^*$ with the duality map denoted by
${<\cdot,\cdot>:\X^*\times\X\rightarrow\mathbb{R}}$. Let us also consider the
\textit{phase space} $\Xi:=\X\times\X^*$ with the canonical symplectic map
$\sigma(X,Y):=<\xi,y>-<\eta,x>$ for $X:=(x,\xi)$ and $Y:=(y,\eta)$ two
arbitrary points of $\Xi$. 
We shall use some classes of H\"{o}rmander type symbols on $\Xi$. For $m\in\mathbb{R}$
and $\rho\in[0,1]$ let us define:
\beq\label{D-snorm-S}
\nu^m_{N,M}(F):=\underset{(x,\xi)\in\Xi}{\sup}<\xi>^{-m}
\underset{|\alpha|=N}{\sum}
\,\underset{|\beta|=M}{\sum}\left|\big(\partial_x^\alpha\partial_\xi^\beta
F\big)(x,\xi)\right|,\quad\forall(N,M)\in\mathbb{N}\times\mathbb{N},\ \forall
F\in C^\infty(\Xi),
\eeq
\beq
S^m_\rho(\Xi):=\left\{F\in
C^\infty(\Xi)\,\left|\,\nu^{m-M\rho}_{N,M}(F)<\infty,
\ \forall(N,M)\in\mathbb{N}\times\mathbb{N}\right\}\right..
\eeq

The\textit{Weyl quantization} (see \cite{Folland, Hormander1, Hormander2})
defines a linear and topological isomorphism
\beq\label{W-calc}
\mathfrak{Op}:\mathscr{S}^\prime(\Xi)\rightarrow\mathbb{B}\big(\mathscr{S}
(\X);\mathscr{S}^\prime(\X)\big)
\eeq
for the strong topologies. Explicitly, for $F\in\mathscr{S}(\X)$ we have the
formula
\beq
\mathfrak{Op}(F)=(2\pi)^{-d/2}\int_{\Xi}\left((2\pi)^{-d/2}\int_{\Xi}e^{
i\sigma(X,Y)}F(Y)dY\right)W(X)dX\equiv(2\pi)^{-d/2}\int_{\Xi}\mathcal{F}
_{\Xi}[F](X)W(X)dX
\eeq
\beq
\big(W((x,\xi))\phi\big)(z):=e^{(i/2)<\xi,x>}e^{-i<\xi,z>}\phi(z+x),
\quad\forall\phi\in\mathscr{S}(\X).
\eeq

We shall
usually work in the Hilbert space $\mathcal{H}:=L^2(\X)$ (defined with respect to the
Lebesgue measure). In general for a complex Hilbert space $\mathcal{K}$ we shall
denote
by $(\cdot,\cdot)_{\mathcal{K}}$ its scalar product (supposed to be
anti-linear in the first variable). For any
Hilbert space $\mathcal{K}$ we denote by
$\mathbb{B}(\mathcal{K})$ the $C^*$-algebra of bounded operators on
$\mathcal{K}$ and by $\mathbb{B}_\infty(\mathcal{K})$ its ideal of
compact operators.

\begin{definition}
Given a Hilbert space $\mathcal{K}$, for any $p\in[1,\infty)$ we consider the
linear subspace of compact operators $A\in\mathbb{B}_\infty(\mathcal{K})$ with
the property that
\beq
\exists\,\underset{N\nearrow\infty}{\lim}\,\underset{n\leq
N}{\sum}\mu_n(A)^p\,<\,\infty,
\eeq
where $\{\mu_n(A)\}_{n\in\mathbb{N}}$ are the singular values of the operator 
$A\in\mathbb{B}_\infty(\mathcal{K})$ \cite{Birman}. This subspace, denoted by
$\mathbb{B}_p(\mathcal{K})$ and called the Schatten-von Neumann class of
order $p$, is a Banach space for the norm
\beq
\|A\|_{\mathbb{B}_p(\mathcal{K})}\ :=\
\underset{N\nearrow\infty}{\lim}\,\left(\underset{n\leq
N}{\sum}\mu_n(A)^p\right)^{1/p}.
\eeq
We recall that $\mathbb{B}_1(\mathcal{K})$ is the space of
trace-class operators and $\mathbb{B}_2(\mathcal{K})$ the space of
Hilbert-Schmidt operators that is a Hilbert space for the scalar product $(
A,B)_{\mathbb{B}_2(\mathcal{K})}:=\text{\sf Tr}(A^*B)$.
\end{definition}

\subsection{The magnetic Weyl calculus.}

The magnetic fields are \textit{closed 2-forms} on $\X$ that we shall suppose to
have components
of class
$BC^\infty(\X)$. To any such magnetic field $B$ one can associate in a highly
non-unique way a vector potential $A$, i.e. a 1-form such that $B=dA$;
different choices for the vector potential are related by a change of gauge
(i.e. $dA=B=dA^\prime$ if and only if $\exists \varphi,\ A^\prime=A+d\varphi$).
We shall always suppose the vector potential to have components of class
$C^\infty_{\text{\sf pol}}(\X)$ because such a choice always exists for
magnetic fields of class $BC^\infty(\X)$.
We use two important \textit{'phase factors'} defined in terms of these
exterior forms:
\beq
\Lambda^A(x,z)\ :=\ \exp\left\{-i\int_{[x,z]}A\right\}
\eeq
\beq\label{D-Omega}
\Omega^B(x,y,z)\ :=\ \exp\left\{-i\int_{<x,y,z>}B\right\}
\eeq
where $[x,z]$ is the oriented line segment from $x\in\X$ to $z\in\X$ and
$<x,y,z>$ is the oriented triangle of vertices $\{x,y,z\}\subset\X$. From
Stokes' Theorem we deduce that
$\Omega^B(x,y,z)\ =\ \Lambda^A(x,y)\Lambda^A(y,z)\Lambda^A(z,x)$.

Let us recall from \cite{MP1} the {\it magnetic Weyl system} defined as the
family of unitary operators in $L^2(\X)$:
\beq\label{magn-W-syst}
\left\{W^A(X)\right\}_{X\in\Xi},\qquad\big(W^A((x,\xi))u\big)(z):=\Lambda^A(z,
z+x)\big(W((x,\xi))u\big)(z),\ \forall u\in\mathcal{H}.
\eeq
As explained in \cite{MP1} they are defined as \textit{unitary groups associated
to the
canonical observables in the minimal coupling formalism} for the
vector potential $A$. With the help of this magnetic Weyl system one can define
a {\it magnetic Weyl calculus} (i.e. a magnetic quantization) as in
\cite{MP1,IMP1}
\beq\label{magn-W-op}
\mathfrak{Op}^A(F)=(2\pi)^{-d/2}\int_{\Xi}\mathcal{F}
_{\Xi}[F](X)W^A(X)dX.
\eeq
Let us recall from \cite{MP1} that gauge equivalent vector potentials define unitary equivalent magnetic quantizations.

Let us make the connection with the `{\it twisted integral kernels}' formalism
in \cite{Ne05}. 
For any integral kernel ${K\in\mathscr{S}^\prime(\X\times\X)}$
let us denote by
$\mathcal{I}\text{\sf nt}K$ the corresponding linear operator on
$\mathscr{S}(\X)$; i.e. $\big(v,(\mathcal{I}\text{\sf
nt}K)u\big)_{L^2(\X)}=\langle K,\overline{v}\otimes u\rangle_{\X\times\X}$ for any
$(u,v)\in\big[\mathscr{S}(\X)\big]^2$. 
To any ${K\in\mathscr{S}^\prime(\X\times\X)}$ one can associate its 'magnetic' twisted integral kernel
\beq\label{twist-int-kernel}
K^A(x,y)\ :=\ {\Lambda}^A(x,y)K(x,y).
\eeq
Let us recall the linear bijection
$\mathfrak{W}:\mathscr{S}^\prime(\Xi)\rightarrow\mathscr{S}^\prime(\X\times\X)$
associated to the usual Weyl calculus \eqref{W-calc} by the equality
$\mathfrak{Op}(F)=\mathcal{I}\text{\sf
nt}(\mathfrak{W}F)$: 
\beq\label{iulie3}
\big(\mathfrak{W}F\big)(x,y)\ :=\
(2\pi)^{-d}\int_{\X^*}e^{i<\xi,x-y>}F\big(\frac{x+y}{2},\xi\big)d\xi.
\eeq
Then we have the equality
\beq\label{magn-quant}
\mathfrak{Op}^A(F)\ =\ \mathcal{I}\text{\sf
nt}({\Lambda}^A\mathfrak{W}F).
\eeq

This functional calculus induces a {\it magnetic Moyal product}
$\sharp^B:\mathscr{S}(\Xi)\times\mathscr{S}(\Xi)\rightarrow\mathscr{S}(\Xi)$
such that $\mathfrak{Op}^A(f\sharp^Bg)=\mathfrak{Op}^A(f)\mathfrak{Op}^A(g)$.
Explicitly we have
\beq\label{magn-Moyal-prod}
\big(f\sharp^Bg\big)\ =\
\pi^{-2d}\int_\Xi\int_\Xi\,e^{-2i\sigma(Y,Z)}\Omega^B(x-y-z,x+y-z,
x-y+z)f(X-Y)g(X-Z)\,dY\,dZ
\eeq
as oscillating integrals (see \cite{Hormander2}). We shall use the notation
\beq
\omega^B(x,y,z):=\Omega^B(x-y-z,x+y-z,x-y+z).
\eeq 
Notice that $\omega^B(x,0,z)=\omega^B(x,y,0)=1$. In \cite{IMP1} one gives the
extension of this magnetic Moyal product to the usual H\"{o}rmander type
symbols and in \cite{IMP1,IMP2} it is proven that this calculus has similar
properties with the usual Moyal product. If
$F\in\mathscr{S}^\prime(\Xi)$ is invertible for this {\it magnetic Moyal
product} we shall denote by $F^-_B$ its inverse.

In \cite{IMP1} it is proven that for any symbol $F\in S^0_0(\Xi)$ the operator
norm of $\mathfrak{Op}^A(F)$ is bounded by some semi-norm defining the
Fr\'{e}chet topology on $S^0_0(\Xi)$ and this semi-norm only depends on the
dimension $d$ of $\X$ and some Fr\'{e}chet semi-norm of the components of the
magnetic field in $BC^\infty(\X)$ (this second fact, although not explicitly
stated there, easily follows when looking
at the detailed proof of Theorem 3.1 in \cite{IMP1-pprt}). We shall
define the following associated norm on the $S^0_0(\Xi)$ symbols:
\beq
\|F\|_B\ :=\ \|\mathfrak{Op}^A(F)\|_{\mathbb{B}(\mathcal{H})}.
\eeq
In \cite{MP1} it is proven that $\mathfrak{Op}^A(F)$ is Hilbert-Schmidt if and
only if $F\in L^2(\Xi)$ and 
$
{\|F\|_{L^2(\Xi)}=
\|\mathfrak{Op}^A(F)\|_{\mathbb{B}_2(\mathcal{H})}}.
$

\subsection{The main result.}

In the papers \cite{Arsu-JOT,Arsu}, G. Arsu uses some ideas and results of H.O.
Cordes
\cite{Cordes} and T. Kato \cite{Kato} and the characterization of Schatten-von
Neumann classes of operators coming from J.W. Calvin and R. Schatten \cite{Birman,Simon2} 
in order to obtain an interesting
criterion for a Weyl operator to be in a given Schatten-von Neumann class. 
Our aim in this paper is to replace the usual Weyl system with the magnetic
Weyl system \eqref{magn-W-syst} and prove a kind of a similar criterion for a {\it magnetic
Weyl operator} \eqref{magn-W-op}. We
prove the following
Theorem.
\begin{theorem}\label{main-Th}
Suppose that $B$ is a magnetic field with
components of class $BC^\infty(\X)$ and let $A$ be a vector potential 
for $B$. Suppose that
$F\in\mathscr{S}^\prime(\Xi)$ and let us denote
by $s(d):=2[d/2]+2$ and $t(d):=d+[d/2]+1$. If $\partial_x^\alpha\partial_\xi^\beta F\in
L^1(\Xi)$ for 
 $|\alpha|\leq s(d)$ and $|\beta|\leq t(d)$, then
$\mathfrak{Op}^A(F)\in\mathbb{B}_1\big(L^2(\X)\big)$ and there exists some
finite constant $C>0$ such that 
$$
\|\mathfrak{Op}^A(F)\|_{\mathbb{B}_1(\mathcal{H})}\leq C\underset{|\alpha|\leq
s(d)}{\sum}\ \underset{|\beta|\leq
t(d)}{\sum}\left\|\partial_x^\alpha\partial_\xi^\beta
F\right\|_{L^1(\Xi)}.
$$
\end{theorem}
\begin{remark}
We note that this Theorem is the {\it 'magnetic'
version} of the case $p=1$ of Theorem 6.4 in
\cite{Arsu-JOT}. Let us consider the value of $s(d)\in\mathbb{N}$ (the
number of
derivatives with respect to the $\X$-variables) that we obtain. For
$d\in\mathbb{N}$ odd, we have $s(d)=d+1$ exactly as in \cite{Arsu-JOT}, while
for $d\in\mathbb{N}$ even we have $s(d)=d+2$ that is larger by one unit with
respect to the value in \cite{Arsu-JOT}; this is just the consequence of our
choice to work without fractional derivatives, in order not to complicate to much 
the technical arguments.
Concerning $t(d)\in\mathbb{N}$, it is interesting to note that it is larger
then the
value given in \cite{Arsu-JOT} for the zero magnetic field situation and that
reflects the fact that the presence of a magnetic field that does not vanish at
infinity obliges us to control several derivatives of the symbol. Moreover,
if we go into the details of our proof of Theorem \ref{main-Th} (more precisely
the proof of Proposition \ref{Cordes-magn}) we easily see that in the absence
of the magnetic field (i.e. of the factor $\omega^{\mathcal{T}^{\X}_zB}$) we can take
$t(d)=d+1$ as in \cite{Arsu-JOT}. 
\end{remark}
\begin{remark}
Considering the usual Sobolev spaces $W^{m,p}(\Xi)$ on $\Xi$, we notice that our main Theorem \ref{main-Th} implies that $\mathfrak{Op}^A:W^{m(d),1}(\Xi)\rightarrow\mathbb{B}_1(\mathcal{H})$ is a continuous operator for $m(d)=\max\{s(d),t(d)\}$. Going back to the 'magnetic' version of the Calderon-Vaillancourt Theorem (Theorem 3.1 in \cite{IMP1}) we notice that it implies that $\mathfrak{Op}^A:W^{p(d),\infty}(\Xi)\rightarrow\mathbb{B}(\mathcal{H})$ is a bounded operator. The analysis in \cite{T} (p. 147) shows that for $1<p<\infty$ the Schatten-von Neumann class $\mathbb{B}_p(\mathcal{H})$ is an interpolation space for the pair $\big(\mathbb{B}_1(\mathcal{H}),\mathbb{B}(\mathcal{H})\big)$, either for the real interpolation method or for the complex one. Using then Theorem 6.4.5 (points (5) or (7)) in \cite{B-L} we obtain the continuity of the 'magnetic quantization' as operator $\mathfrak{Op}^A:W^{n(d,p),p}(\Xi)\rightarrow\mathbb{B}_p(\mathcal{H})$ for some well-defined $n(d,p)\in\mathbb{R}_+$ and thus the following Corollary of our Theorem \ref{main-Th}.
\end{remark}
\begin{corollary}\label{main-C}
Suppose that $B$ is a magnetic field with
components of class $BC^\infty(\X)$ and let $A$ be a vector potential  for $B$. Then for any
$1<p\leq\infty$ there exists some $n(d,p)\in\mathbb{N}$ such that $\mathfrak{Op}^A:\mathscr{S}(\Xi)\rightarrow\mathbb{B}(\mathcal{H})$ defines a bounded operator from $W^{n(d,p),p}(\Xi)$ to $\mathbb{B}_p(\mathcal{H})$ with a norm  depending on $d$, $p$ and some $BC^\infty$ semi-norm of $B$.
\end{corollary}
\begin{remark}
We notice that the index $n(d,p)$ appearing in the above Corollary is not optimal. We also notice that a somehow related compactness criterion has been given in \cite{I}.
\end{remark}

\section{Proof of Theorem \ref{main-Th}.}\label{S2}

While the idea of the proof follows closely the arguments and some results from
\cite{Arsu-JOT,Cordes,Kato}, several essential technical steps have to be
completely reconsidered in order to control the {\it 'magnetic phase factors'}
present in the magnetic Weyl calculus. 

Let us recall that in \cite{Arsu-JOT,Kato} one begins by noticing that
the fundamental solutions of some simple elliptic differential operators are
symbols of trace-class operators (as implied by Cordes Lemma
\cite{Cordes,Arsu}). Then, starting from  the following formula
valid for two
symbols $f$ and $g$ of class $\mathscr{S}(\Xi)$ 
\beq\label{Kato-1}
\mathfrak{Op}(f*g)=\int_\Xi\,f(X)\mathfrak{Op}(\mathcal{T}^{\Xi}_{X}g)\,
dX=\int_\Xi\,f(X)\Big(W(-X)\mathfrak{Op}(g)W(X)\Big)\,dX,
\eeq
a procedure elaborated by G.
Arsu \cite{Arsu-JOT} using
the results of J.W. Calkin and R. Schatten \cite{Birman,Simon1,Simon2} and some ideas of T. Kato \cite{Kato} allows to
obtain the desired result. Let us develop these ideas and adapt them to
our situation.

For $(s,t)\in\mathbb{R}_+\times\mathbb{R}_+$ let us consider the following
$\Psi$DO on $\Xi$:
\beq
\mathfrak{L}_{s,t}\ :=\
\big(\bb1-\Delta_\X\big)^{s/2}\big(\bb1-\Delta_{\X^*}\big)^{t/2}
\eeq
where
\beq
\Delta_\X\ :=\ \underset{1\leq j\leq d}{\sum}\partial_{x_j}^2,\qquad
\Delta_{\X^*}\ :=\ \underset{1\leq j\leq d}{\sum}\partial_{\xi_j}^2.
\eeq
Let us denote by ${\psi_s}\in\mathscr{S}^\prime(\X)$ the unique
fundamental solution of $\big(\bb1-\Delta_\X\big)^{s/2}$ and by 
${\dot{\psi}_t}\in\mathscr{S}^\prime(\X^*)$ the unique
fundamental solution of $\big(\bb1-\Delta_{\X^*}\big)^{t/2}$.
Let us recall the following well known result (see for example
section 5 in \cite{Arsu-JOT} and Corollary 2.6 in \cite{Arsu} for the last
statement).
\begin{proposition}\label{P-Bes}
For any $s>0$ the distribution ${\psi_s}\in\mathscr{S}^\prime(\X)$ is in
fact a function of class $L^1(\X)$ that is in
$\mathscr{S}(\X\setminus\{0\})$. For
$|x|\searrow0$ we have
that
\beq
\partial_x^\alpha{\psi_s}\,\sim\,\mathscr{O}\big(1+|x|^{s-d-|\alpha|}\big)
,\quad s-d-|\alpha|\ne0,
\eeq
\beq
\partial_x^\alpha{\psi_s}\,\sim\,\mathscr{O}\big(1+\text{\sf ln}|x|^{
-1} \big),\quad s-d-|\alpha|=0.
\eeq
For $s>d$ we have that $\psi_s\in\mathcal{H}^p(\X)$ for any $p<(s/2)$.
We have evidently a similar behaviour for
${\dot{\psi}_t}\in\mathscr{S}^\prime(\X^*)$.
\end{proposition}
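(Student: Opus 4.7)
The plan is to identify $\psi_s$ with the classical Bessel potential kernel and to read all the claimed properties off a single integral representation. Since $\widehat{\psi_s}(\xi)$ must equal a constant times $(1+|\xi|^2)^{-s/2}$, the $\Gamma$-function subordination identity
\begin{equation*}
(1+|\xi|^2)^{-s/2}\ =\ \frac{1}{\Gamma(s/2)}\int_0^\infty t^{s/2-1}e^{-t(1+|\xi|^2)}\,dt,\qquad s>0,
\end{equation*}
combined with Fourier inversion inside the integral of the Gaussian $e^{-t|\xi|^2}$, produces
\begin{equation*}
\psi_s(x)\ =\ C_{d,s}\int_0^\infty t^{(s-d)/2-1}\,e^{-t}\,e^{-|x|^2/(4t)}\,dt.
\end{equation*}
Everything else is to be extracted from this formula.

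The bound $\psi_s\in L^1(\X)$ follows immediately by Fubini, since integrating the Gaussian in $x$ and then the resulting $t^{s/2-1}e^{-t}$ in $t$ gives $C'_{d,s}\,\Gamma(s/2)<\infty$. For $x\neq 0$ the factor $e^{-|x|^2/(4t)}$ tames both the singularity of $t^{(s-d)/2-1}$ at $t=0$ and every $x$-derivative obtained by differentiation under the integral sign, yielding $\psi_s\in C^\infty(\X\setminus\{0\})$. Rapid decay as $|x|\to\infty$ is extracted from a Laplace/saddle estimate of $e^{-t-|x|^2/(4t)}$: its minimum at $t=|x|/2$ produces an $e^{-|x|}$ factor (also for every derivative), which is far more than enough to put $\psi_s$ in $\mathscr{S}(\X\setminus\{0\})$.

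For the short-distance behavior I would split $\int_0^\infty=\int_0^1+\int_1^\infty$. The tail $\int_1^\infty$, together with each of its $x$-derivatives, is uniformly bounded near the origin and only contributes the $\mathscr{O}(1)$ part of the estimate. In $\int_0^1$, the derivative $\partial_x^\alpha$ brings down a polynomial in $x/t$ of degree $|\alpha|$ times the Gaussian, and the substitution $t=|x|^2 u$ rescales everything to a prefactor $|x|^{s-d-|\alpha|}$ times an integral of the form $\int_{|x|^{-2}}^{\infty} u^{-(s-d-|\alpha|)/2-1}\,e^{-|x|^2u}\,e^{-1/(4u)}\,du$. Inspection of that integral yields a finite limit when $s-d-|\alpha|<0$ (producing the power $|x|^{s-d-|\alpha|}$), a bounded contribution when $s-d-|\alpha|>0$, and the logarithmic divergence $\ln|x|^{-1}$ in the critical case $s-d-|\alpha|=0$, arising from $\int^{|x|^{-2}}u^{-1}du$.

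The Sobolev regularity for $s>d$ is purely Fourier-theoretic: $\psi_s\in\mathcal{H}^p(\X)$ amounts to $\int(1+|\xi|^2)^{p-s}\,d\xi<\infty$, i.e.\ $p<s-d/2$, which is automatic throughout the range $p<s/2$ as soon as $s>d$. The only genuinely delicate step is the derivative bookkeeping in the short-distance analysis, in particular the clean separation of the three regimes around $s-d-|\alpha|=0$; everything else is routine once the integral representation above is in hand.
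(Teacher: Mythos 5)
The paper itself offers no proof of this proposition: it simply cites Section~5 of \cite{Arsu-JOT} and Corollary~2.6 of \cite{Arsu}, so your proposal is \emph{supplying} an argument the authors chose not to reproduce. Your route, via the $\Gamma$-subordination identity and Fourier inversion of the Gaussian to reach the representation
$\psi_s(x)=C_{d,s}\int_0^\infty t^{(s-d)/2-1}e^{-t}e^{-|x|^2/(4t)}\,dt$,
is precisely the classical Bessel-kernel approach (Aronszajn--Smith, Stein), and it is also what the cited sources rely on. From that representation your derivations of $\psi_s\in L^1(\X)$ (Fubini and $\Gamma(s/2)$), of $\psi_s\in\mathscr{S}(\X\setminus\{0\})$ (differentiation under the integral plus the Laplace--saddle bound $\min_t(t+|x|^2/4t)=|x|$ yielding $e^{-|x|}$ decay), and of the Sobolev inclusion (that $p<s/2$ forces $p<s-d/2$ once $s>d$, so $\int\langle\xi\rangle^{2(p-s)}\,d\xi<\infty$) are all correct and match Corollary~2.6 of \cite{Arsu}.

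The one place the write-up is not quite right is the local rescaling. With $t=|x|^2u$ and $t\in(0,1)$ you land on $u\in(0,|x|^{-2})$, not $(|x|^{-2},\infty)$, and the exponent carried by $t^{(s-d)/2-1}\,dt$ is $u^{(s-d)/2-1}$ with a positive coefficient on $(s-d)/2$, so the integral you display does not in fact come from the substitution you state; moreover the claimed prefactor $|x|^{s-d-|\alpha|}$ can only emerge after correctly accounting for the $x$- and $t$-powers produced by $\partial_x^\alpha$. The cleanest repair, and the one I would recommend, is to avoid tracking $x/t$-polynomials altogether and instead use the parabolic absorption bound
$|\partial_x^\alpha e^{-|x|^2/(4t)}|\le C_\alpha\,t^{-|\alpha|/2}\,e^{-|x|^2/(8t)}$.
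Inserting this gives
$|\partial_x^\alpha\psi_s(x)|\le C\int_0^\infty t^{(s-|\alpha|-d)/2-1}e^{-t}e^{-|x|^2/(8t)}\,dt$,
so $\partial_x^\alpha\psi_s$ has, up to constants and a harmless dilation of $x$, the same near-origin profile as $\psi_{s-|\alpha|}$; the trichotomy $|x|^{s-d-|\alpha|}$ versus $\ln|x|^{-1}$ versus bounded then follows immediately from the $\alpha=0$ case by replacing $s$ with $s-|\alpha|$. With that modification your argument is complete and exact; as written, the substitution step needs these corrections before the three regimes can be read off reliably.
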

This result and the Cordes Lemma \cite{Cordes,Arsu} allow to prove that
$\Psi_{s,t}:={\psi_s}\otimes{\dot{\psi}_t}$ is a tempered distribution on $\Xi$ 
defining a trace-class
operator on $L^2(\X)$. Then, using \ref{Kato-1} and the trivial fact that for any
$f\in\mathscr{S}^\prime(\Xi)$, 
$$
f\,=\,f*\delta_0\,=\,f*\big(\mathfrak{L}_{s,t}\Psi_{s,t}\big)\,=\,\big(\mathfrak{L}_{s,t}f\big)*\Psi_{s,t}
$$
(with $\delta_0$ the Dirac
measure of mass 1 at $0\in\Xi$), {\it Kato's operator calculus} \cite{Kato} and Lemma 4.3 in
\cite{Arsu-JOT} give the desired result in the absence of the magnetic field.
An important difficulty for the case of the 'magnetic' Weyl calculus comes from
the fact that equation \eqref{Kato-1} is no
longer valid for the magnetic Weyl calculus; more precisely we have
\beq
\mathfrak{Op}^A(\mathcal{T}^{\Xi}_{X}g)\ \ne\ W^A(X)^*\mathfrak{Op}^A(g)W^A(X).
\eeq
The following subsection is devoted to the control of this difficulty.

\subsection{Magnetic translations of symbols.}

In Proposition 3.4 in \cite{IMP2} one defines the action of $\Xi$ on the symbols
in $\mathscr{S}^\prime(\Xi)$ by {\it 'magnetic translations'}:
\beq
\Xi\ni
Z\mapsto\mathfrak{T}^B_Z\in\mathbb{B}\big(\mathscr{S}^\prime(\Xi);\mathscr{S}
^\prime(\Xi)\big)
\eeq
as the conjugate action associated to the {\it magnetic Weyl system}:
\beq
\mathfrak{Op}^A\big(\mathfrak{T}^B_{Z}g\big)\ :=\
W^A(Z)^*\mathfrak{Op}^A(g)W^A(Z).
\eeq
Let us denote by $\lambda^A_z(x):=\Lambda^A(x,x+z)$ and notice that formula \eqref{magn-W-syst} may be written as
\beq
W^A(Z)=\lambda^A_zW(Z)=W(Z)\big(\mathcal{T}^{\X}_z\lambda^A_z\big).
\eeq
Thus we can write
\beq
\mathfrak{Op}^A\big(\mathfrak{T}^B_{Z}g\big)=W(-Z)\overline{\lambda^A_z}\mathfrak{Op}^A(g)W(Z)
\big(\mathcal{T}^{\X}_z\lambda^A_z\big)=\overline{\big(\mathcal{T}^{\X}_z\lambda^A_z\big)}
W(-Z)\mathfrak{Op}^A(g)W(Z)\big(\mathcal{T}^{\X}_z\lambda^A_z\big).
\eeq
We notice that
\beq
W(-Z)\mathfrak{Op}^A(g)W(Z)=W(-Z)\big(\mathfrak{Int}\Lambda^A\mathfrak{W}g\big)W(Z)=
\mathfrak{Int}\Lambda^{\mathcal{T}^{\X}_zA}\mathfrak{W}(\mathcal{T}^\Xi_Zg)=\mathfrak{Op}^{\mathcal{T}^{\X}_zA}
\big(\mathcal{T}^\Xi_Zg\big),
\eeq
i.e.
\beq
\mathfrak{Op}^{\mathcal{T}^{\X}_zA}
\big(\mathcal{T}^\Xi_Zg\big)=\big(\mathcal{T}^{\X}_z\lambda^A_z\big)
W^A(Z)^*\mathfrak{Op}^A(g)W^A(Z)\overline{\big(\mathcal{T}^{\X}_z\lambda^A_z\big)}.
\eeq

Finally, replacing $B$ with $\mathcal{T}^\X_{-z}B$ and $A$ with ${\mathcal{T}^\X_{-z}A}$ and denoting by $\mathfrak{U}^A_z$ the unitary operator of multiplication with the modulus 1 function $\lambda^A_z$ we get
\beq
\mathfrak{Op}^A\big(\mathcal{T}^\Xi_Zg\big)\ =\ (\mathfrak{U}^A_z)^*W^{\mathcal{T}^\X_{-z}A}(Z)^*\mathfrak{Op}^{\mathcal{T}^\X_{-z}A}(g)W^{\mathcal{T}^\X_{-z}A}(Z)\mathfrak{U}^A_z
\eeq

These arguments allow us to write
$$
\mathfrak{Op}^A(f*g)\,=\,\int_\Xi\,f(Z)\mathfrak{Op}^A(\mathcal{T}^\Xi_Zg)dZ\,=
$$
\beq\label{Kato-rem-magn}
=\,\int_\Xi\,f(Z)(\mathfrak{U}^A_z)^*W^{\mathcal{T}^\X_{-z}A}(Z)^*\mathfrak{Op}^{\mathcal{T}^\X_{-z}A}(g)W^{\mathcal{T}^\X_{-z}A}(Z)\mathfrak{U}^A_zdZ.
\eeq
This last formula replaces \eqref{Kato-1} in the case of the 'magnetic' Weyl
calculus.

\subsection{Kato's operator calculus.}

We recall here one of the main results in \cite{Arsu-JOT} using the {\it
operator calculus} elaborated by T. Kato in \cite{Kato}. Suppose given \textit{a
map} $V:\Xi\rightarrow\mathbb{B}(\mathcal{H})$ 
measurable for the weak operator topology
on $\mathbb{B}(\mathcal{H})$. For any trace-class
operator $T\in\mathbb{B}_1(\mathcal{H})$ and any $\varphi\in\mathscr{S}(\Xi)$ we
can define the following integral (with respect to the weak operator
topology):
\beq\label{Kato-op}
\varphi\{T\}\ :=\ \int_\Xi\varphi(X)\big(V(X)^*TV(X)\big)\,dX.
\eeq

\begin{proposition}\label{T-Kato} (Point (b) of Lemma 4.3 in \cite{Arsu-JOT})
If there exists a finite $C>0$ such that 
$\|V(X)\|_{\mathbb{B}(\mathcal{H})}\leq\sqrt{C}$ almost everywhere on $\Xi$,
then for any $\varphi\in L^1(\Xi)$ the integral \eqref{Kato-op} is well
defined in the weak operator topology on $\mathbb{B}(\mathcal{H})$, belongs to
$\mathbb{B}_1(\mathcal{H})$ and we have the estimation
\beq\label{est-G-2}
\|\varphi\{T\}\|_{\mathbb{B}_1(\mathcal{H})}\ \leq\
C\|\varphi\|_{L^1(X)}\|T\|_{\mathbb{B}_1(\mathcal{H})}.
\eeq
\end{proposition}

\begin{remark}
We notice that for any vector potential, the map $\Xi\ni Z\mapsto
W^{\mathcal{T}^\X_{-z}A}(Z)\mathfrak{U}^A_z\in\mathbb{B}(\mathcal{H})$ satisfies the condition in Theorem
\ref{T-Kato} with a constant $C=1$, due to their unitarity. Moreover, the proof of point (b) in Lemma 4.3 in \cite{Arsu-JOT} clearly remains true if we replace the trace-class operator $T$ by any function $\X\ni z\mapsto T_z\in\mathbb{B}_1(\mathcal{H})$ with bounded $\mathbb{B}_1(\mathcal{H})$-norm.
\end{remark}
\begin{remark}\label{Rem-p}
Let us notice that if one wants to treat the case $p>1$ for the magnetic quantization in a way similar to points 
(a) and (c) in Lemma 4.3 in 
\cite{Arsu-JOT}, an important difficulty comes from the fact that, due to formula \eqref{Kato-rem-magn}, 
one has to consider a function $\X\ni z\mapsto T_z\in\mathbb{B}_1(\mathcal{H})$ instead of a constant factor $T\in\mathbb{B}_1(\mathcal{H})$. Some simple examples show that the proof of point (a) in Lemma 4.3 in \cite{Arsu-JOT} is no longer valid in this situation.
\end{remark}

Using \eqref{Kato-rem-magn} and the above Remark we obtain the following
Corollary of Proposition \ref{T-Kato} (the {\it 'magnetic version'} of the case $p=1$ in Theorem
4.5 in \cite{Arsu-JOT}):

\begin{corollary}\label{cor-Kato}
Suppose given a magnetic field $B$ with
components of class $BC^\infty(\X)$ and suppose fixed some vector potential $A$
for $B$; if a symbol
$F\in\mathscr{S}^\prime(\Xi)$ has the property
$\mathfrak{Op}^A(F)\in\mathbb{B}_1(\mathcal{H})$, then for any ${f\in L^1(\Xi)}$ we have that
${\mathfrak{Op}^A(f*F)\in\mathbb{B}_1(\mathcal{H})}$ and
$${\|\mathfrak{Op}^A(f*F)\|_{\mathbb{B}_1(\mathcal{H})}\leq\|f\|_{L^1(\Xi)}
\|\mathfrak{Op}^A(F)\|_{\mathbb{B}_1(\mathcal{H})}}.
$$
\end{corollary}

\subsection{Trace-class property of $\mathfrak{Op}^{A}(\Psi_{s,t})$.}

To finish the proof of our Theorem \ref{main-Th} it is enough to prove the following {\it 'magnetic version'} of Lemma 1 in
\cite{Cordes}. 

\begin{proposition}\label{Cordes-magn} Suppose given a magnetic field $B=dA$
with components of class $BC^\infty(\X)$;
for $t>3d/2$ and $s>2[d/2]+2$ we have that
$\mathfrak{Op}^{\mathcal{T}^\X_{-z}A}(\Psi_{s,t})\in\mathbb{B}_1(\mathcal{H})$ uniformly for
$z\in\X$.
\end{proposition}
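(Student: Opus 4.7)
The plan is to combine gauge invariance with a Hilbert--Schmidt factorisation of the integral kernel, controlled uniformly in $z$ via the transverse gauge and a Taylor expansion of the magnetic triangular factor.

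First, two vector potentials for $\tau_z B$ differing by a gradient $d\phi$ yield magnetic quantizations conjugate by the unitary multiplication operator $e^{i\phi(x)}$ on $L^2(\X)$, so the trace-class property and trace norm of $\mathfrak{Op}^{\tau_z A}(\Psi_{s,t})$ depend only on $\tau_z B$. I would replace $\tau_z A$ by the transverse (Poincar\'{e}) gauge
$$A^z_j(x):=-\sum_k x_k\int_0^1 s\,B_{jk}(sx+z)\,ds,$$
which satisfies $A^z(0)=0$, $x\cdot A^z(x)\equiv 0$, and whose derivatives $\partial^\gamma A^z$ are polynomially bounded in $|x|$ with constants controlled by finitely many $BC^\infty(\X)$ seminorms of $B$, uniformly in $z$.

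Using $\mathfrak{Op}^A(F)=\mathcal{I}\text{\sf nt}(\Lambda^A\,\mathfrak{W}F)$ and computing $\mathfrak{W}(\psi_s\otimes\dot\psi_t)$ from $\mathcal{F}\dot\psi_t=\langle\cdot\rangle^{-t}$, the integral kernel reads
$$K^z(x,y)=\Lambda^{A^z}(x,y)\,\psi_s\!\Bigl(\tfrac{x+y}{2}\Bigr)\,c_d\,\langle x-y\rangle^{-t}.$$
I would then show $K^z$ defines a trace-class operator uniformly in $z$ by factoring it through $L^2(\X\times\X)$, using: the decoupling identity $\psi_s(\tfrac{x+y}{2})=2^{-d}\int\psi_{s/2}(\tfrac{x+u}{2})\,\psi_{s/2}(\tfrac{y-u}{2})\,du$ (derived from $\psi_s=\psi_{s/2}*\psi_{s/2}$ by the change of variable $w=(y-u)/2$ in the convolution); a convolution factorisation $\langle\cdot\rangle^{-t}=g*g$ with $g\in L^2(\X)$ (obtained by setting $\hat g:=\sqrt{\dot\psi_t}$, which lies in $L^2(\X^*)$ since $\dot\psi_t\in L^1(\X^*)$ for $t>0$); and the cocycle identity $\Lambda^{A^z}(x,y)=\overline{\Omega^{\tau_z B}(x,u,y)}\,\Lambda^{A^z}(x,u)\,\Lambda^{A^z}(u,y)$ to split the magnetic phase between the two halves.

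The main obstacle is the triangular factor $\overline{\Omega^{\tau_z B}(x,u,y)}$ produced by the cocycle, which couples all three variables and so obstructs a direct factorisation into $K^z_1(x,u)\,K^z_2(u,y)$. The escape relies on the key observation that $\Omega^{\tau_z B}(x,\tfrac{x+y}{2},y)=1$ because the triangle $(x,\tfrac{x+y}{2},y)$ is degenerate and the flux of $B$ through it vanishes; Taylor-expanding $\Omega^{\tau_z B}$ in the auxiliary variable $u$ around $u_0=\tfrac{x+y}{2}$ and using $u-\tfrac{x+y}{2}=\tfrac12((u-x)+(u-y))$, every finite-order Taylor term becomes a genuinely factorised kernel $(u-x)^\beta(u-y)^\gamma$ multiplied by a coefficient $f^z_\alpha(x,y)$ that is polynomially bounded in $|x-y|$ with constants uniform in $z$ (thanks to the transverse-gauge bounds on $A^z$ and its derivatives). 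These polynomial weights in $(x-y)$ must then be absorbed by the decay of $\langle x-y\rangle^{-t}$, while the Bessel-potential asymptotics of Proposition \ref{P-Bes} dictate how many derivatives of $\psi_s$ stay in $L^2$ after accounting for the singularity at the origin; balancing the two yields the thresholds $s>2[d/2]+2$ and $t>3d/2$ stated in the proposition. The extra $[d/2]$ in $t(d)$ above the non-magnetic threshold $d+1$ of \cite{Arsu-JOT} is precisely the cost of controlling the magnetic phase through this Taylor expansion.
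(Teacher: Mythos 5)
Your proposal departs fundamentally from the paper's argument, and in its present form it has genuine gaps that I do not see how to close.

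The paper proves this proposition by factoring $\Psi_{s,t}$ at the level of \emph{symbols}, using the magnetic Moyal product: one writes
$\Psi_{s,t}=\big(q_{-r}\sharp^{\tau_zB}r_{m,\lambda}\big)\sharp^{\tau_zB}\big(p_{m,\lambda}\sharp^{\tau_zB}q_r\sharp^{\tau_zB}\Psi_{s,t}\big)$
with $p_{m,\lambda}(X)=\langle\xi\rangle^m+\lambda$, $q_r(X)=\langle x\rangle^r$, and $r_{m,\lambda}$ the $\sharp^{\tau_zB}$--inverse of $p_{m,\lambda}$ (invertibility supplied by \cite{MPR1}). Both factors are then shown to lie in $L^2(\Xi)$ with bounds that, thanks to translation-invariance of the $BC^\infty$--seminorms of $B$, are uniform in $z$; by the Hilbert--Schmidt characterization $\mathfrak{Op}^A(\Phi)\in\mathbb{B}_2\Leftrightarrow\Phi\in L^2(\Xi)$ one gets a product of two Hilbert--Schmidt operators, hence a trace-class operator. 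The magnetic phase never appears explicitly: it is wrapped inside the magnetic Moyal product, which is the whole point of the Cordes--Kato--Arsu method in this setting.

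Your plan instead attacks the \emph{integral kernel} directly and tries to split it as a composition of two $L^2(\X\times\X)$ kernels. There are two structural problems with this. First, even forgetting the magnetic phase entirely, your two ``decoupling'' identities — $\psi_s\bigl(\tfrac{x+y}{2}\bigr)=2^{-d}\int\psi_{s/2}\bigl(\tfrac{x+u}{2}\bigr)\psi_{s/2}\bigl(\tfrac{y-u}{2}\bigr)du$ and $\langle x-y\rangle^{-t}=\int g(x-u')\,g(u'-y)\,du'$ — introduce \emph{different} auxiliary variables $u$ and $u'$, so their product is a double integral, not a single composition $\int K_1(x,u)K_2(u,y)\,du$. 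You would have to reconcile the two splittings over one auxiliary variable, and you never do; this is precisely the step where Cordes' lemma passes to the symbol/Moyal level rather than staying with the kernel. Second, and this is fatal for the magnetic case, the Taylor expansion does not remove the coupling induced by $\overline{\Omega^{\tau_zB}(x,u,y)}$. As you yourself write, the $N$-th Taylor term is a factorised monomial $(u-x)^\beta(u-y)^\gamma$ \emph{multiplied by} a coefficient $f^z_\alpha(x,y)$ obtained by differentiating $\Omega^{\tau_zB}$ in $u$ at the degenerate triangle. That coefficient is a genuine function of $(x,y)$ (it evaluates $B$ and its derivatives along the segment $[x,y]$), not a function of one variable linked to $u$, so the monomial $(u-x)^\beta(u-y)^\gamma\cdot f^z_\alpha(x,y)$ is again a three-variable-coupled kernel and does \emph{not} factorise as $K_1(x,u)K_2(u,y)$. ``Absorbing'' these coefficients into $\langle x-y\rangle^{-t}$ leaves a bounded function of $(x,y)$, which remains coupled. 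Finally, a finite-order Taylor expansion always leaves a remainder involving $(u-\tfrac{x+y}{2})^\alpha$ with $|\alpha|=N+1$ and a derivative of $\Omega^{\tau_zB}$ at an intermediate non-degenerate triangle; this remainder does not factorise through $u$ at all and is never addressed. Taking $N$ large to make the remainder ``small'' is also blocked: the Taylor coefficients grow like $\langle x-y\rangle^{N}$, so they erode the decay $\langle x-y\rangle^{-t}$ and cap the order of expansion. For these reasons I do not believe the kernel-level convolution/Taylor scheme can be pushed through; the natural escape is precisely to work with the magnetic Moyal product, which is what the paper does.
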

\begin{proof}
We shall proceed as in \cite{Cordes,Arsu} but we shall work with the
magnetic Moyal product \eqref{magn-Moyal-prod}.
The idea is to write
$\Psi_{s,t}$ as a magnetic Moyal product of two symbols of class $L^2(\Xi)$:
\beq
\Psi_{s,t}\ =\
\Phi^{(1)}\sharp^{\mathcal{T}^\X_{-z}B}\Phi^{(2)},\qquad\Phi^{(j)}\in
L^2(\Xi),\ j=1,2.
\eeq
Let us use the following shorthand notations for the magnetic Moyal product with a translated magnetic field and the corresponding magnetic inverse:
\beq
\sharp^B_z\ :=\ \sharp^{\mathcal{T}^\X_{-z}B};\qquad F^-_{B,z}\ :=\ F^-_{\mathcal{T}^\X_{-z}B}.
\eeq

Let us consider the symbols
$p_{m,\lambda}(X):=<\xi>^m+\lambda$ for any $m>0$ and some $\lambda>0$ large
enough; they are evidently elliptic symbols of class $S^m_1(\Xi)$ that, for
$\lambda>0$ large enough, are
invertible for the {\it magnetic Moyal product} due to Theorem 1.8 in
\cite{MPR1}. More precisely, looking at the proof of this cited Theorem we
see that 
\beq
r_{m,\lambda}\,:=\,\big(p_{m,\lambda}\big)^-_{B,z}\,=\,
\big(<\xi>^m+\lambda\big)^{-1}{\sharp^B_z}
\left(\underset{k\in\mathbb{N}}{\sum}\quad\underset{k}{\underbrace{s_{z,m}
(\lambda){\sharp^B_z}\ldots{\sharp^B_z}
{s_{z,m}(\lambda)}}}
\right)
\eeq
with $s_{z,m}(\lambda)\in S^{-\kappa}_1(\Xi)$ for some $\kappa\in(0,1)$, having the
operator norm strictly less then 1 for $\lambda>0$ large enough and
the defining Fr\'{e}chet semi-norms bounded by some semi-norm of the components of
$\mathcal{T}^\X_{-z}B$
in $BC^\infty(\X)$; as these semi-norms are translation invariant, we have
uniform bounds for $z\in\X$. Thus,
using Proposition \ref{est-m-M-prod} in the Appendix and Proposition 6.2 in
\cite{IMP2} we conclude that for $\lambda>0$ large enough, the symbol semi-norms
of
$r_{m,\lambda}\in S^{-m}_1(\Xi)$ are bounded by some
constants
that do not depend on $z\in\X$.

Let us also consider the function $q_r(X):=<x>^r$ with $r\in\mathbb{R}$,
defining a symbol of class 
$S^0_1(\Xi)$ for any $r\leq0$. Formally
we can write
\beq\label{HS-prod}
\Psi_{s,t}\ =\
\big(q_{-r}{\sharp^B_z}r_{m,\lambda}\big){\sharp^B_z}
\big(p_{m,\lambda}{\sharp^B_z}q_r{\sharp^B_z}\Psi_{s,t}\big)
\eeq
Using once again Proposition \ref{est-m-M-prod} in the Appendix and the fact
that the semi-norms of the components of the magnetic field that control the
magnetic Moyal products are translation
invariant, we easily conclude that for
$r>0$, $m>0$ and $\lambda>0$ large enough
\beq
\big(q_{-r}{\sharp^B_z}r_{m,\lambda}\big)\ \in\ S^{-m}_1(\Xi),
\eeq
uniformly for $z\in\X$. Moreover, for any $a\geq0$
and $b\geq0$ we can write:
\beq
<x>^a<\xi>^b\big(q_{-r}{\sharp^B_z}r_{m,\lambda}\big)(x,\xi)\,
=
\eeq
$$
=\,\pi^{-2d}<x>^a<\xi>^b\int\limits_{\Xi\times\Xi}e^{-2i\sigma(Y,Y^\prime)}
\omega^{
\mathcal{T}^X_{-z}B}(x,y,y^\prime)<x-y>^{-r}r_{m,\lambda}(x-y^\prime,\xi-\eta^\prime)
dY\,dY^\prime\,=
$$
$$
=\,\pi^{-d}C_a\int_\X<x-y>^{-(r-a)}\left(\frac{<x>^a}{<x-y>^a<y>^a}
\right)\times
$$
$$
\times\left[\int_{\X^*}<y>^a<\eta^\prime>^be^{2i<\eta^\prime,y>}
\left(\frac{<\xi>^b}{<\xi-\eta^\prime>^b<\eta^\prime>^b}
\right)\left(<\xi-\eta^\prime>^b
r_{m,\lambda}(x,\xi-\eta^\prime)\right)d\eta^\prime\right]dy.
$$
We use the identities:
\beq
<y>^{2N_1}e^{2i<\eta^\prime,y>}=(1-4^{-1}\Delta_{\eta^\prime})^{N_1}e^{
2i<\eta^\prime,y>},\qquad
<\eta^\prime>^{2N_2}e^{2i<\eta^\prime,y>}=(1-4^{-1}\Delta_{y})^{N_2}e^{
2i<\eta^\prime,y>}
\eeq
and
after some integrations by parts as in the proof of Proposition
\ref{est-m-M-prod} in the Appendix, taking $0\leq a\leq r$, $0\leq b\leq m$ and
$2N_1\geq[a]+d+1$, $2N_2\geq[b]+d+1$ we get that
\beq
<x>^a<\xi>^b\left|\big(q_{-r}{\sharp^B_z}r_{m,\lambda}\big)(x,\xi)\right|\,
\leq\, C_{a,d}
\underset{(x,\xi)\in\Xi}{\sup}<\xi>^b\underset{|\alpha|\leq
2N_2}{\sum}\left|\big(\partial_\xi^\alpha r_{m,\lambda}\big)(x,\xi)\right|\,\leq
C(a,b)\nu^m_{0,2N_2}(r_{m,\lambda}).
\eeq
A similar computation
can be made for any derivative $\partial_x^\alpha\partial_\xi^\beta
\big(q_{-r}{\sharp^B_z}r_{m,\lambda}\big)$
so that we conclude that
\beq
q_a\,p_{b,0}\big(q_{-r}{\sharp^B_z}r_{m,\lambda}\big)\in
S^0_1(\Xi),\qquad\forall(a,b)\in[0,r]\times[0,m]
\eeq
uniformly in $z\in\X$ and taking $r>d/2$ and $m>d/2$ we note that
$\Phi^{(1)}:=q_{-r}{\sharp^B_z}r_{m,\lambda}\in L^2(\Xi)$ so that
$\mathfrak{Op}^{\mathcal{T}^\X_{-z}A}(\Phi^{(1)})\in\mathbb{B}_2\big(L^2(\X)\big)$ uniformly in $z\in\X$.

Now let us study the second factor in \eqref{HS-prod}. We note that for $m>0$
and $r>0$
the first two functions of this second magnetic Moyal product, 
namely $p_{m,\lambda}$
and $q_r$, are in fact $C^\infty(\Xi)$ functions with
polynomial growth at infinity uniformly for all their derivatives, and thus
Proposition 4.23 in \cite{MP1} shows
that their magnetic Moyal product may be well defined in the sense of tempered
distributions and moreover this product (as a tempered distribution) may be
further composed by magnetic Moyal product with any tempered distribution on
$\Xi$. Thus $\Phi^{(2)}$ is well defined as a tempered distribution on $\Xi$ and
we can also use the associativity of the magnetic Moyal product. Let us note
that this tempered distribution depends in fact on $z\in\X$ due to the
translated magnetic field appearing in the two 'magnetic' Moyal products in the
definition of $\Phi^{(2)}$ as the second parenthesis in \eqref{HS-prod}; thus
we shall use the notation $\Phi^{(2)}_z$ and notice that this dependence is
uniformly smooth with respect to the weak distribution topology.

We begin by computing
$q_r{\sharp^B_z}\Psi_{s,t}=q_r{\sharp^B_z}\big(\psi_s\otimes\dot{\psi}
_t\big)$ for $r>d/2>0$:
$$
\big[q_r{\sharp^B_z}\big(\psi_s\otimes\dot{\psi}
_t\big)\big](x,\xi)\ =
$$
$$
=\ \pi^{-2d}\int\limits_{\Xi\times\Xi}e^{-2i\sigma(Y,Y^\prime)}
\omega^{\mathcal{T}^\X_{-z}B}(x,y,y^\prime)<x-y>^{r}\psi_s(x-y^\prime)\dot{\psi}
_t(\xi-\eta^\prime)dY\,dY^\prime\ =
$$
\beq
=\
\pi^{-d}\psi_s(x)\int\limits_{\X\times\X^*}e^{2i<\eta^\prime,y>}<x-y>^{r}
\dot{\psi}_t(\xi-\eta^\prime)dy\,d\eta^\prime\ =
\eeq
\beq
=\ \pi^{-d}\psi_s(x)\int\limits_\X e^{2i<\xi,y>}<x-y>^{r}\left(\int_{\X^*}
e^{-2i<\xi-\eta^\prime,y>}\dot{\psi}_t(\xi-\eta^\prime)d\eta^\prime\right)dy=
\eeq
\beq
=\ 2^d\psi_s(x)\int\limits_\X e^{2i<\xi,y>}\frac{<x-y>^{r}}{<2y>^t}dy\ =\
2^d\big(q_r\psi_s\big)(x)\int\limits_\X
e^{2i<\xi,y>}\frac{<x-y>^{r}}{<x>^r<2y>^t}dy\ =
\eeq
\beq
=\ (2\pi)^{d/2}\big(q_r\psi_s\big)(x)\left((\bb1\otimes\mathcal{F}^-_{\X})f_{r,t}\right)(x,\xi)
\eeq
where:
\beq\label{def-f}
f_{r,t}(x,y)\ :=\ \frac{<x-(y/2)>^{r}}{<x>^r<y>^t}.
\eeq
It is easy to check that $f_{r,t}\in C^\infty_{\text{\sf pol}}(\X\times\X)$ and
satisfies the estimations:
\beq\label{est-f}
\left|\big(\partial^\alpha_x\partial^\beta_yf_{r,t}\big)(x,y)\right|\ \leq\
C_{\alpha\beta}<x>^{-|\alpha|}<y>^{r-t-|\beta|}.
\eeq

Now let us consider some $m>d/2$, and use the notations:
$\widetilde{f}_{r,t}:=(2\pi)^{d/2}(\bb1\otimes\mathcal{F}^-_{\X})f_{r,t}$ and
for any $r\geq0$ the function $\widetilde{\psi}_{s,r}(x):=<x>^r\psi_s(x)$. We
notice that for any $r\geq0$ the function
$\widetilde{\psi}_{s,r}$ has exactly the same
properties as those of $\psi_s$ given in Proposition \ref{P-Bes}.

We want to show that:
\beq\label{6.12}
\Phi^{(2)}_z\,:=\,p_{m,\lambda}{\sharp^B_z}q_r{\sharp^B_z}\Psi_{s,t}
\,=\,
(2\pi)^{d/2}\Big(p_{m,\lambda}{\sharp^B_z}\left[
\big(q_r\psi_s\otimes1\big)\left((\bb1\otimes\mathcal{F}^-_{\X})f_{r,t}\right)\right]
\Big)
\eeq
as a tempered distribution on $\Xi$ is in fact an $L^2(\Xi)$ function uniformly
for $z\in\X$. In order
to deal with the possible singularities of this distribution we shall
regularize it by introducing 4 cut-off functions in the oscillatory
integrals appearing in the definition \eqref{magn-Moyal-prod}, more precisely
we shall approach $\Phi^{(2)}_z$, in the weak distribution topology, by the
following continuous functions on $\Xi$ depending also on 4 positive parameters
$\{R_j\}_{j=1,2,3,4}$:
\beq
\widetilde{\Phi^{(2)}}_{(R_j,z)}(x,\xi)\,:=\,
\int\limits_\Xi\int\limits_\Xi\,e^{-2i<\eta,y^\prime>}e^{2i<\eta^\prime,y>}
\chi_{R_1}(y)\chi_{R_2}(y^\prime)\chi_{R_3}(\eta)\chi_{R_4}(\eta^\prime)\,\times
\eeq
$$
\times\,p_{m,\lambda}(\xi-\eta)\,\widetilde{\psi}_{s,r}(x-y^\prime)\,\widetilde{
f}_{r,t} (x-y^\prime ,
\xi-\eta^\prime)\,\omega^{\mathcal{T}^\X_{-z}B}(x,y,y^\prime)\,dy\,dy^\prime\,d\eta\,
d\eta^\prime,
$$
where for any $R>0$ we define $\chi_R(v):=\chi(R^{-1}|v|)$ with
$\chi:\mathbb{R}_+\rightarrow\mathbb{R}_+$ a smooth decreasing
function that satisfies $\chi(t)=1$ for $0\leq t\leq1$ and $\chi(t)=0$ for
$t\geq2$.

We shall first consider the term $<\xi-\eta>^m$ in the function
$p_{m,\lambda}(\xi-\eta)=<\xi-\eta>^m+\lambda$ and the associated integral
\beq\label{11.12-1}
\widetilde{\Phi^{(3)}}_{(R_j,z)}(x,\xi)\,:=\,
\int\limits_\Xi\int\limits_\Xi\,e^{-2i<\eta,y^\prime>}e^{2i<\eta^\prime,y>}
\chi_{R_1}(y)\chi_{R_2}(y^\prime)\chi_{R_3}(\eta)\chi_{R_4}(\eta^\prime)\,\times
\eeq
$$
\times\,<\xi-\eta>^m\,\widetilde{\psi}_{s,r}(x-y^\prime)\,\widetilde{
f}_{r,t} (x-y^\prime ,
\xi-\eta^\prime)\,\omega^{\mathcal{T}^\X_{-z}B}(x,y,y^\prime)\,dy\,dy^\prime\,d\eta\,
d\eta^\prime.
$$
We make the measure preserving change of variables:
\beq
(y,y^\prime,\eta,\eta^\prime)\mapsto(y,u,\zeta,\zeta^\prime);\qquad
\left\{
\begin{array}{l}
 u:=x-y^\prime\\
 \zeta:=\xi-\eta\\
 \zeta^\prime:=\xi-\eta^\prime,
\end{array}
\right.
\eeq
so that \eqref{11.12-1} may be written as
\beq
\widetilde{\Phi^{(3)}}_{(R_j,z)}(x,\xi)\,:=\,
\int\limits_\Xi\int\limits_\Xi\,e^{-2i<\xi,x-u-y>}e^{2i<\zeta,x-u>}e^{
-2i<\zeta^\prime,y>}
\chi_{R_1}(y)\chi_{R_2}(x-u)\chi_{R_3}(\xi-\zeta)\chi_{R_4}(\xi-\zeta^\prime)\,
\times
\eeq
$$
\times\,<\zeta>^m\,\widetilde{\psi}_{s,r}(u)\,\widetilde{f}_{r,t}(u,\zeta^\prime)\,
\omega^{\mathcal{T}^\X_{-z}B}(x,y,x-u)\,dy\,du\,d\zeta\,
d\zeta^\prime\,=
$$
\beq\label{Phi-3-2}
=\,\int_{\X^*}e^{2i<\zeta,x>}\,<\zeta>^m\,\chi_{R_3}(\xi-\zeta)\left\{
\int_\X\,e^{-2i<\zeta,u>}\,\widetilde{\psi}_{s,r}(u)\,\chi_{R_2}(x-u)\,
\right.\,\times
\eeq
$$
\times\,\left.\left[\int_\X\,e^{-2i<\xi,x-u-y>}\left(\int_{\X^*}e^{
-2i<\zeta^\prime,y>}\chi_{R_4}
(\xi-\zeta^\prime)\widetilde{
f}_{r,t}(u,\zeta^\prime)\,d\zeta^\prime\right)\,\omega^{\mathcal{T}^\X_{-z}B}(x,y,x-u)\chi_{
R_1}(y)\, dy\right]du\right\}d\zeta\,\equiv
$$
\beq\label{f1-ian}
\equiv\,\int_{\X^*}e^{2i<\zeta,x>}\,<\zeta>^m\,\chi_{R_3}(\xi-\zeta)\left(
\int_\X\,e^{-2i<\zeta,u>}\,\Theta_{(R_j,z)}(x,\xi,u)du\right)d\zeta.
\eeq
Let us study closer the continuous function introduced in \eqref{f1-ian}:
\beq
\Theta_{(R_j,z)}(x,\xi,u)\,:=\,\widetilde{\psi}_{s,r}(u)\,\chi_{R_2}(x-u)\,
\times
\eeq
$$
\times\,\left[
\int_\X\,e^{-2i<\xi,x-u-y>}\left(\int_{\X^*}e^{-2i<\zeta^\prime,y>}\chi_{R_4}
(\xi-\zeta^\prime)\widetilde{
f}_{r,t}(u,\zeta^\prime)\,d\zeta^\prime\right)\,\omega^{\mathcal{T}^\X_{-z}B}(x,y,x-u)\chi_{
R_1}(y)\, dy\right].
$$
We make the change of variable $\X\ni y\mapsto v:=x-u-y\in\X$ that allow
us to write it as:
\beq\label{Theta-1}
\Theta_{(R_j,z)}(x,\xi,u)\,:=\,\widetilde{\psi}_{s,r}(u)\,\chi_{R_2}(x-u)\,
T_{R_1,R_4,z}(x,\xi,u),
\eeq
\beq
T_{R_1,R_4,z}(x,\xi,u)\,:=
\eeq
$$
=\,\left[
\int_\X e^{-2i<\xi,v>}\left(\int_{\X^*}e^{-2i<\zeta^\prime,x-u-v>}\chi_{R_4}
(\xi-\zeta^\prime)\widetilde{
f}_{r,t}(u,\zeta^\prime)\,d\zeta^\prime\right)\omega^{\mathcal{T}^\X_{-z}B}(x,x-u-v,x-u)\chi_{
R_1}(x-u-v)\,dv\right].
$$
We recall that
\beq
\widetilde{f}_{r,t}\,:=\,(2\pi)^{d/2}(\bb1\otimes\mathcal{F}^-_{
\X})f_{r,t}
\eeq
and the fact that the distribution $f_{r,t}\in\mathscr{S}^\prime(\X\times\X)$
defined in \eqref{def-f} is in fact a smooth function. 
Moreover we have that the function
\beq
\X\times\X\ni(u,v)\mapsto<v>^{t-r}f_{r,t}(u,v)\in\mathbb{C}
\eeq
is of class $BC^\infty(\X\times\X)$ so that for $t>r+(d/2)>d$, $f_{r,t}$
belongs to $BC\big(\X_u;L^2(\X_v)\big)$  (sometimes we indicate with an index the variable in $\X$). Using the Fourier inversion Theorem
and noticing that for any $g\in BC\big(\X;L^2(\X)\big)$ we have that
$\|(\bb1\otimes\tau_{-u})g(u,\cdot)\|_{L^2(\X)}=\|g(u,\cdot)\|_{L^2(\X)}$, we
conclude that the tempered distributions
\beq
T_{R_4}(x,\xi,u,v)\,:=\,\int_{\X^*}e^{-2i<\zeta^\prime,x-u-v>}\chi_{R_4}
(\xi-\zeta^\prime)\widetilde{
f}_{r,t}(u,\zeta^\prime)\,d\zeta^\prime,\qquad R_4\in[1,\infty)
\eeq
are a family of functions of class $BC\big(\X_u;L^2(\X_v)\big)$ and by the definition of the Fourier transform on $L^2(\X)$:
\beq
\forall(x,\xi)\in\Xi,\quad\exists\underset{R_4\nearrow\infty}{\lim}T_{
R_4 }
(x,\xi,u,v)\,=\,(2\pi)^df_{r,t}(u,2(x-u-v)),\ \text{in }BC\big(\X_u;L^2(\X_v)\big)
\eeq
uniformly with respect to $(x,\xi)\in\Xi$. Due to the fact that by
definition we have that $\omega^{\mathcal{T}^\X_{-z}B}\in BC(\X^3)$ uniformly and smoothly
for $z\in\X$ we conclude that
\beq
\forall(z,x,\xi)\in\X\times\Xi,\quad\exists\underset{R_4\nearrow\infty
}{\lim}T_{R_4}(x,\xi,u,v)\omega^{\mathcal{T}^\X_{-z}B}(x,x-u-v,x-u)\chi_{R_1}(x-u-v)
\,=
\eeq
$$
=\,(2\pi)^df_{r,t}(u,2(x-u-v))\omega^{\mathcal{T}^\X_{-z}B}(x,x-u-v,x-u)\chi_{
R_1}(x-u-v)=:\theta^{B}_{R_1,z}(x,u,v),
$$
in $BC\big(\X_u;L^2(\X_v)\big)$ uniformly with respect to
$(z,x,\xi,R_1)\in\X\times\Xi\times\mathbb{R}_+$. Moreover, for any magnetic field $B$ with components
of class $BC^\infty(\X)$ we have that for any $(x,u)\in\X^2$
\beq
\left|\theta^B_{R_1,z}(x,u,v)\right|\leq
Cf_{r,t}(u,2(x-u-v)=C<2(x-u-v)>^{r-t}
\eeq
and thus 
\beq
\underset{(x,u)\in\X^2}{\sup}\left\|\theta^B_{R_1,z}(x,u,\cdot)\right\|_{L^2(\X)}\leq
C\|q_{r-t}\|_{L^2(\X)}.
\eeq
We easily conclude that $\theta^B_{R_1,z}\in BC(\X^3)\cap
BC\big(\X_x\times\X_u;L^2(\X_v)\big)$ and the map $\X\ni
z\mapsto\theta^{\mathcal{T}^\X_{-z}B}\in BC\big(\X_x\times\X_u;L^2(\X_v)\big)$ is smooth
and bounded.
Using Plancherel Theorem and the Dominated Convergence Theorem we conclude that
\beq
\forall z\in\X,\quad\exists\underset{R_1\rightarrow\infty}{\lim}
\underset{R_4\rightarrow\infty}{\lim}
T_{R_1,R_4,z}\,=:\,F_z,\quad\text{in
}BC\big(\X_x\times\X_u;L^2(\X^*_\xi)\big),
\eeq
uniformly with respect to $z\in\X$. 

Finally, noticing that by
Proposition \ref{P-Bes}, $\widetilde{\psi}_{s,r}\in L^2(\X)$ for any
$(s,r)\in\mathbb{R}_+\times\mathbb{R}_+$ we conclude that
\beq
\forall z\in\X,\qquad\exists\underset{R_1\rightarrow\infty}{\lim}
\underset{R_2\rightarrow\infty}{\lim}\underset{R_4\rightarrow\infty}{\lim}
\Theta_{(R_j,z)}\,=\,
\big(1\otimes1\otimes\widetilde{\psi}_{s,r}\big)F_{z},\qquad\text{
in }L^2\big(\X_u;BC\big(\X_x;L^2(\X^*_\xi)\big)\big)
\eeq
uniformly for $z\in\X$.

In order to control the factor $<\zeta>^m$ in the first integral in
\eqref{Phi-3-2}, that we consider as a Fourier transform of a tempered
distribution, let us study now the derivatives of $\Theta_{(R_j,z)}$ with
respect to the variable $u\in\X$:
\beq
\left(\partial^\alpha_u\Theta_{(R_j,z)}\right)(x,\xi,u),
\qquad|\alpha|=p\in\mathbb{N}^*.
\eeq
By Proposition \ref{P-Bes} we know that for $s>d$ we have that
$\widetilde{\psi}_{s,r}\in\mathcal{H}^p(\X)$ for $p<(s/2)$ and thus all the
derivatives $\partial^\alpha\widetilde{\psi}_{s,r}$ are of class $L^2(\X)$ for
$s>2|\alpha|$. Let us study the behaviour
of the
distributions
\beq
\partial^\alpha_uF_{z}(x,\xi,u),\qquad|\alpha|=p\in\mathbb{N}^*.
\eeq
When computing $\partial^\alpha_uT_{R_1,R_4,z}$, using Leibniz rule we have to
control the derivatives of order up to $p\in\mathbb{N}$ with respect to $u\in\X$
of $f_{r,t}(u,2(x-u-v))$, of $\omega^{\mathcal{T}^\X_{-z}B}(x,x-u-v,x-u)$ and of the cut-off
functions. Now,
$\partial^\alpha_uf_{r,t}(u,2(x-u-v))$ is easy to compute and due to the estimations \eqref{est-f}, for any
$p\in\mathbb{N}$ these functions have the same properties as the function $f_{r,t}$
in \eqref{def-f}. Using then Lemma 1.1 in
\cite{IMP1} we know that we have the estimations
\beq
\big(\partial^\alpha_y\partial^\beta_{y^\prime}\omega^{\mathcal{T}^\X_{-z}B}\big)(x,y,
y^\prime)\,=\,\theta^{B}_{\alpha,\beta,z}(x,y,
y^\prime)\big(<x>+<y>+<y^\prime>\big)^{|\alpha|+|\beta|}
\eeq
where $\theta^{B}_{\alpha,\beta,z}\in BC(\X^3)$ uniformly in $z\in\X$.
In conclusion we can write
$
\partial^\alpha_u\omega^{\mathcal{T}^\X_{-z}B}(x,x-u-v,x-u)
$
as a finite sum of terms of the form
$
\theta^{B}_z(x,u,v)<x>^p<u>^p<x-u-v>^p
$ with $\theta^{B}_z\in BC(\X^3)$ uniformly in $z\in\X$. We get rid of
the growing factor $<u>^p$ by replacing $\widetilde{\psi}_{s,r}$ by
$\widetilde{\psi}_{s,r+p}$ that has the same properties as
$\widetilde{\psi}_{s,r}$. The
factor $<x-u-v>^p$ may be absorbed in the factor $f_{r,t}$ without changing its
properties that we used above, as long as $t>p+r+(d/2)$.
We remain with the factor $<x>^p$; in order to control its growth at infinity
we turn back at formula \eqref{Phi-3-2} and notice that
\beq
\widetilde{\Phi^{(3)}}_{R_j,z}(x,\xi)\,=
\eeq
$$
=\,\int_{\X^*}\left(\frac{(1-\Delta_\zeta)^{p/2}}{
<2x>^p}e^{2i<\zeta,x>}\right)<\zeta>^m\,\chi_{R_3}(\xi-\zeta)\left\{
\int_\X\,e^{-2i<\zeta,u>}\,\widetilde{\psi}_{s,r}(u)\,\chi_{R_2}(x-u)\,
\Theta_{(R_j,z)}(x,\xi,u)du\right\}d\zeta.
$$
Considering the $\zeta-$integral in the sense of distributions we can transfer
the differential operator $(1-\Delta_\zeta)^{p/2}$ on the $\mathscr{S}(\X^*)$
function
\beq
\X^*\ni\zeta\mapsto<\zeta>^m\,\chi_{R_3}(\xi-\zeta)\left\{
\int_\X\,e^{-2i<\zeta,u>}\,\widetilde{\psi}_{s,r}(u)\,\chi_{R_2}(x-u)\,
\Theta_{(R_j,z)}(x,\xi,u)\right\}\in\mathbb{C}.
\eeq
Using the well known facts that $(1-\Delta)^{-1/2}$ and
$(1-\Delta)^{-1/2}\partial_j$ are bounded operators in $L^2(\X^*)$ we notice
that for $p\in\mathbb{N}$ we can write:
\beq\label{form-develop}
(1-\Delta_\zeta)^{p/2}\,=\,\underset{|\alpha|\leq
p}{\sum}X_\alpha\partial^\alpha_\zeta
\eeq
with $X_\alpha\in\mathbb{B}\big(L^2(\X^*)\big)$ for any $\alpha\in\mathbb{N}^d$.
Then we only have to notice that $\partial^\alpha_\zeta<\zeta>^m$ is a symbol
of type $S^m(\X^*)$ for any $\alpha\in\mathbb{N}^d$ and 
$$
\partial^\alpha_\zeta e^{-2i<\zeta,u>}=(-2i)^{|\alpha|}u^\alpha e^{-2i<\zeta,u>}
$$
and we can control the factor $u^\alpha$ by $<u>^{|\alpha|}$ that can be
absorbed in $\widetilde{\psi}_{s,r}$ for any $|\alpha|\in\mathbb{N}$ without
changing its properties needed for the arguments above to hold. 
Finally we notice that all the terms containing derivatives of the cut-off
functions $\chi_{R_j}$ clearly go to 0 when $R_j\rightarrow\infty$ by the
Lebesgue Dominated Convergence Theorem. In conclusion, for $s>2m$, all the
derivatives
$\partial^\alpha_u\Theta_{(R_j,z)}$ are functions of class
$L^2\big(\X_u;BC\big(\X_x;L^2(\X^*_\xi)\big)\big)$ uniformly for
$z\in\X$ and choosing $m=[d/2]+1$, in the first integral in
\eqref{Phi-3-2} considered as a Fourier transform of a tempered distribution, we
intertwine the multiplication with $<\zeta>^m$ with the
Fourier transform with respect to the variable $u\in\X$. We use formula
\eqref{form-develop} once again and the
Plancherel Theorem noticing that for any $F\in
L^2\big(\X_u;BC\big(\X_x;L^2(\X^*_\xi)\big)\big)$, with $\||F\||^2:=
\int_\X\underset{x\in\X
}{\sup}\int_{\X^*}\left|F(x,\xi,u)\right|^2d\xi\,du$ we have that
\beq
\int_\X\int_{\X^*}\left|F(x,\xi,x)\right|^2d\xi\,dx\leq\int_\X\underset{y\in\X
}{\sup}\int_{\X^*}\left|F(y,\xi,x)\right|^2d\xi\,dx=\||F\|^2.
\eeq
This proves that our distribution
$\widetilde{\Phi^{(3)}}_{R_j,z}$ is in fact a function of class $L^2(\Xi)$
uniformly for $z\in\X$.

The term with $\lambda>0$ replacing $<\xi-\eta>^m$ will also define a
function of class $L^2(\Xi)$ evidently.
The uniformity with respect to $z\in\X$ follows directly from the above remarks
concerning the translation invariance of the bounds. Summarizing we must have:
\beq
r>d/2,\ m>d/2,\ p=m=[d/2]+1,\ t>r+p+d/2>3d/2,\ s>2m=
\left\{\begin{array}{l}
               d+1,\text{ if }d=2p\\
               d+2,\text{ if }d=2p+1.
              \end{array}
\right.
\eeq
\end{proof}

Putting now together Corollary \ref{cor-Kato} and
Proposition \ref{Cordes-magn} and noticing that $t(d)>3d/2$ we obtain the
following result.

\begin{theorem}\label{main-Th-prime}
Suppose given a magnetic field $B$ with
components of class $BC^\infty(\X)$ and suppose fixed some vector potential $A$
for $B$. For
$s\geq s(d)$, $t\geq t(d)$ and $f\in\mathscr{S}^\prime(\Xi)$, if $\mathfrak{L}_{s,t}f\in L^1(\Xi)$, then
$\mathfrak{Op}^A(f)\in\mathbb{B}_1(\mathcal{H})$ and
$\|\mathfrak{Op}^A(f)\|_{\mathbb{B}_1(\mathcal{H})}\leq
C\|\mathfrak{L}_{s,t}f\|_{
L^1(\Xi)}$.
\end{theorem}

This result evidently implies Theorem \ref{main-Th}.

\section{Appendix}

In this Appendix we prove a simplified version of Theorem 2.2 in \cite{IMP1},
that is enough for our analysis in this paper. Not only that in this special
case the proof is much simpler then the one in \cite{IMP1} but we also put into
evidence the dependence on the magnetic field.

\begin{proposition}\label{est-m-M-prod}
For a magnetic field $B$ with components of class $BC^\infty(\X)$ the 'magnetic'
Moyal product
$$
S^m_1(\Xi)\times S^p_1(\Xi)\ni(f,g)\mapsto f\sharp^Bg\in S^{m+p}_1(\Xi)
$$
is continuous for the Fr\'{e}chet topologies being equicontinuous for
$B_{jk}\in\mathcal{B},\forall(j,k)$, with $\mathcal{B}\subset BC^\infty(\X)$ any
bounded subset for its Fr\'{e}chet topology.
\end{proposition}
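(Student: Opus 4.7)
The approach is the standard oscillatory-integral analysis of the product formula
\begin{equation*}
(f\sharp^Bg)(x,\xi)\,=\,\pi^{-2d}\int_\Xi\int_\Xi e^{-2i\sigma(Y,Z)}\,\omega^B(x,y,z)\,f(X-Y)\,g(X-Z)\,dY\,dZ,
\end{equation*}
combined with careful bookkeeping of the polynomial growth produced by derivatives of the magnetic phase $\omega^B$. First I would apply $\partial_x^\alpha\partial_\xi^\gamma$ under the integral sign (this is justified a posteriori, once the oscillatory integral is regularized), and use the Leibniz rule to distribute derivatives onto the three factors $\omega^B$, $f(X-Y)$ and $g(X-Z)$. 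The contributions from $f$ and $g$ are controlled by the defining seminorms of $S^m_1(\X)$ and $S^p_1(\X)$: every $\xi$-derivative improves the decay in $\langle\xi-\eta\rangle$ or $\langle\xi-\zeta\rangle$ by one power, while $x$-derivatives do not alter the order.

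Second, to make the oscillating integral absolutely convergent, I would perform four separate integrations by parts with respect to the exponential $e^{-2i\sigma(Y,Z)}$, using the identities
\begin{equation*}
\bigl(1-\tfrac14\Delta_\eta\bigr)^{N_1}e^{-2i\sigma(Y,Z)}=\langle z\rangle^{2N_1}e^{-2i\sigma(Y,Z)},\qquad
\bigl(1-\tfrac14\Delta_\zeta\bigr)^{N_2}e^{-2i\sigma(Y,Z)}=\langle y\rangle^{2N_2}e^{-2i\sigma(Y,Z)},
\end{equation*}
and the symmetric ones trading $\langle\eta\rangle^{2N_3}$ for $\Delta_z$ and $\langle\zeta\rangle^{2N_4}$ for $\Delta_y$. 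The $\eta$- and $\zeta$-integrations are then made absolutely convergent by choosing $N_3,N_4$ larger than $d/2$ plus the order of the symbols in the dual variable, after using Peetre's inequality $\langle\xi-\eta\rangle^{m}\leq C\langle\xi\rangle^{m}\langle\eta\rangle^{|m|}$ (and similarly in $\zeta$) to separate the $\langle\xi\rangle^{m+p}$ factor that belongs to the output. The $y,z$ integrations become convergent once $N_1,N_2$ are chosen large enough to overcome the polynomial growth produced by the derivatives of $\omega^B$.

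The key new technical point, and what I expect to be the main obstacle, is the control of the $y,z$-derivatives of the magnetic phase. Here I would invoke Lemma~1.1 of \cite{IMP1}, which provides representations of the form
\begin{equation*}
\bigl(\partial_y^{\alpha}\partial_z^{\beta}\omega^B\bigr)(x,y,z)\,=\,\theta^B_{\alpha,\beta}(x,y,z)\,\bigl(\langle x\rangle+\langle y\rangle+\langle z\rangle\bigr)^{|\alpha|+|\beta|},
\end{equation*}
with $\theta^B_{\alpha,\beta}\in BC_u(\X^3)$ whose sup-norm is bounded by a finite seminorm of the components $\{B_{jk}\}$ in $BC^\infty(\X)$. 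The factor $\langle x\rangle^{|\alpha|+|\beta|}$ is harmless in the class $S^\cdot_1(\X)$ (it commutes with the integral and can be paired with $\langle\xi\rangle^{m+p}$ only through the correct order), while $\langle y\rangle^{|\alpha|+|\beta|}$ and $\langle z\rangle^{|\alpha|+|\beta|}$ are absorbed by taking $N_2$, respectively $N_1$, sufficiently large. The $x$-derivatives of $\omega^B$ are directly bounded in $BC_u$ by seminorms of $B$, so they produce no extra difficulty.

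Finally, I would collect the estimates: for any pair $(\alpha,\gamma)$ and the corresponding choice of $N_1,N_2,N_3,N_4$, the resulting bound factors as $\langle\xi\rangle^{m+p-|\gamma|}$ times a finite product of seminorms $\nu^{m-\cdot}_{\cdot,\cdot}(f)\,\nu^{p-\cdot}_{\cdot,\cdot}(g)$, times a constant depending polynomially on a single $BC^\infty(\X)$-seminorm of the components of $B$. This gives at once the continuity of the product $S^m_1(\X)\times S^p_1(\X)\to S^{m+p}_1(\X)$ for the Fr\'echet topologies and the equicontinuity on bounded subsets $\mathcal{B}\subset BC^\infty(\X)$ of magnetic field components, since the bounds depend on $B$ only through finitely many of its seminorms. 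The regularization needed to legitimize the manipulations can be implemented, as in the main text, by inserting Schwartz cut-offs $\chi_{R_j}$ in the four variables $y,z,\eta,\zeta$ and passing to the limit via dominated convergence once the uniform bounds are established.
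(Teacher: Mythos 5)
Your strategy is essentially the paper's: regularize, apply Leibniz under the integral, integrate by parts in the oscillatory variables to generate decay, and control the derivatives of the magnetic phase via Lemma~1.1 of \cite{IMP1}. The framing with $\bigl(1-\tfrac14\Delta_\cdot\bigr)^{N}$ operators rather than single-power integrations by parts is a presentational variant, and the explicit use of Peetre's inequality to extract the $\langle\xi\rangle^{m+p}$ factor is the same computation the paper leaves implicit. So far so good.

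However, there is a concrete error you should fix. You assert that ``the $x$-derivatives of $\omega^B$ are directly bounded in $BC_u$ by seminorms of $B$, so they produce no extra difficulty.'' This is false. The phase $\omega^B(x,y,z)=\exp\{-i\,\mathrm{flux}\}$, where the flux is the integral of $B$ over a triangle (or parallelogram) whose two independent side-vectors are proportional to $y$ and $z$. Differentiating in $x$ produces the factor $-i\,\partial_x(\mathrm{flux})$, and since
\begin{equation*}
\partial_x(\mathrm{flux})\ \sim\ \sum_{j\ne k}y_jz_k\int\!\!\int\,(\partial B_{jk})(\cdot)\,ds\,dt,
\end{equation*}
one gets $\bigl|\partial_x^{\alpha_3}\omega^B(x,y,z)\bigr|\ \lesssim\ \langle y\rangle^{|\alpha_3|}\langle z\rangle^{|\alpha_3|}$ with a constant controlled by the $BC^\infty$ seminorms of $B$; in particular it grows polynomially in $y,z$ and is \emph{not} a $BC_u$ function of $(x,y,z)$. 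These factors, coming from the Leibniz rule (not just from integration by parts), add to the powers of $\langle y\rangle$ and $\langle z\rangle$ that you already have to absorb, and must be handled with the same choice of sufficiently large $N_1,N_2$ as the growth from $\partial_y^\alpha\partial_z^\beta\omega^B$. The fix is straightforward but needed for the statement you want, which includes an estimate for \emph{all} $x$-derivatives of $f\sharp^Bg$. Relatedly, the parenthetical claim that the $\langle x\rangle^{|\alpha|+|\beta|}$ factor from Lemma~1.1 is ``harmless'' is, as written, misleading: no power of $\langle x\rangle$ can be tolerated in the $S^{m+p}_1(\X)$ seminorms since those involve a $\sup$ over $x$ with no compensating $x$-decay. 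It happens that for $B\in BC^\infty(\X)$ the true growth of the phase derivatives is only in $y,z$ (so no $\langle x\rangle$ factor needs to be absorbed at all), and this is what makes the argument go through, but the argument as you state it does not show that.
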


\begin{proof}
Via a standard cut-off procedure it is enough to consider
$(f,g)\in\mathscr{S}(\Xi)\times\mathscr{S}(\Xi)$ and to prove that there exist
some finite constants $C_{M,N}>0$ and some natural numbers $m_1,m_2,n_1,n_2$ depending on $m,p,M,N$ such that
\beq
\nu^{m+p-N}_{M,N}\big(f\sharp^Bg\big)\leq\,C_{M,N}
\nu_{M-1}(B)\nu^{m-n_1}_{m_1,n_1}(f)\nu^{p-n_2}_{m_2,n_2}(g)
\eeq
where we have considered the semi-norms indexed by $n\in\mathbb{N}$:
\beq
\nu_n(F):=\underset{x\in\X}{\sup}\,\underset{|\alpha|\leq
n}{\sup}\left|\big(\partial_x^\alpha F\big)(x)\right|
\eeq
defining the Fr\'{e}chet topology on $BC^\infty(\X)$ and
\beq
\nu_n(B):=\underset{j,k}{\max}\,\nu_n\big(B_{jk}\big).
\eeq
Thus let us compute
\beq
<\xi>^{-(m+p-|\beta|)}\left(\partial_x^\alpha\partial_\xi^\beta(f\sharp^Bg)
\right)(x,\xi):=
\eeq
$$
\pi^{-2d}<\xi>^{-(m+p-|\beta|)}\partial_x^\alpha\partial_\xi^\beta\left(
\int\limits_{\Xi\times\Xi}e^{-2i\sigma(Y,Z)}
\omega^B(x,y,z)
f(x-y,\xi-\eta)g(x-z,\xi-\zeta)dY\,dZ\right),
$$
that is a finite linear combination of terms of the form
$$
\int\limits_{\Xi\times\Xi}e^{-2i\sigma(Y,Z)}
\big(\partial_x^{\alpha_3}\omega^B(x,y,z)\big)
<\xi>^{-(m-|\beta_1|)}\big(\partial_x^{\alpha_1}\partial_\xi^{\beta_1}f\big)(x-y
,\xi-\eta)<\xi>^{-(p-|\beta_2|)}\big(\partial_x^{\alpha_2}\partial_\xi^{\beta_2}
g\big)(x-z,\xi-\zeta)dY\,dZ
$$
with $\alpha_1+\alpha_2+\alpha_3=\alpha$ and $\beta_1+\beta_2=\beta$.

In order to estimate these integrals we insert the integrable factor
$<y>^{-2n}<z>^{-2n}<\eta>^{-2n}<\zeta>^{-2n}$ with $(d/2)<n\in\mathbb{N}$ and
get rid of the growing factors by the usual integration by parts trick using the
identities
\beq
\partial_{y_j}e^{-2i\sigma(Y,Z)}=2i\zeta_je^{-2i\sigma(Y,Z)},\ 
\partial_{z_j}e^{-2i\sigma(Y,Z)}=-2i\eta_je^{-2i\sigma(Y,Z)},
\eeq
\beq
\partial_{\eta_j}e^{-2i\sigma(Y,Z)}=-2iz_je^{-2i\sigma(Y,Z)},\ 
\partial_{\zeta_j}e^{-2i\sigma(Y,Z)}=2iy_je^{-2i\sigma(Y,Z)}.
\eeq
\end{proof}

\section*{Acknowledgements}
N.A. thanks the ``Simion Stoilow'' Institute of Mathematics of the Romanian
Academy for its hospitality during the final elaboration of this work and the
Laboratory LR11ES53 ``Alg\`{e}bre, G\'{e}ometrie et Th\'{e}orie Spectrale'' of
the University of Sfax and particularly Mondher Damak for their support. RP
thanks the University of Gafsa for its kind hospitality and acknowledges the
partial support from the Research Grant of the Romanian National
Authority for Scientific Research, CNCS-UEFISCDI, project number
PN-II-ID-PCE-2011-3-0131.


\begin{thebibliography}{}

\bibitem{Arsu-JOT} G. Arsu, \textit{On Schatten-von Neumann class properties of
pseudodifferential operators. The Cordes-Kato method}, Journal of
Operator Theory 59 (1) (2008), 81-115.

\bibitem{Arsu} G. Arsu, \textit{On Schatten-von Neumann class properties of
pseudo-differential operators. Cordes’ lemma}, arXiv: math/0610397.

\bibitem{B-L} J. Bergh, J. L\"{o}fstr\"{o}m: Interpolation spaces. An Introduction. Springer-Verlag, 1976.

\bibitem{Birman} M.S. Birman and M.Z. Solomjak: Spectral Theory of
Self-Adjoint Operators in Hilbert Space, D. Riedel Publishing Company, 1987.

\bibitem{B} N. Bourbaki: Elements of Mathematics. Topological Vector
Spaces; Chapters 1-5, Springer, 1987.

\bibitem{Cordes} H.O. Cordes: On compactness of commutators of
multiplications and convolutions, and boundedness of pseudodifferential
operators, J. Funct. Anal. 18 (1975), 115-131.

\bibitem{Folland} G. B. Folland: Harmonic analysis in phase space, Princeton
University Press, 1989.

\bibitem{Hormander1} L. H\"{o}rmander: The Weyl calculus of
pseudo-differential operators, Comm. Pure Appl. Math. 32 (1979), 359-443.

\bibitem{Hormander2} L. H\"{o}rmander: The Analysis of Linear Partial
Differential Operators, vol. I, III, Springer-Verlag, Berlin Heidelberg New
York Tokyo, 1983, 1985.

\bibitem{Kato} T. Kato: Boundedness of some pseudo-differential operators,
Osaka J. Math. 13 (1976), 1-9.

\bibitem{I} V. Iftimie: {\it Compact magnetic pseudodifferential operators},
Ann. Univ. Buch. (math. series) 1 (LIX), 111--119, (2010).

\bibitem{IMP1} V. Iftimie, M. M\u antoiu and R. Purice: {\it Magnetic
Pseudodifferential Operators}, Publ. RIMS. {\bf 43}, 585--623, (2007).

\bibitem{IMP1-pprt} V. Iftimie, M. M\u antoiu and R. Purice: {\it Magnetic
Pseudodifferential Operators}, preprint arXiv: math.AP/0510492v1 (2005).

\bibitem{IMP2} V. Iftimie, M. M\u antoiu and R. Purice: {\it Commutator Criteria
for magnetic pseudodifferential operators}, Comm. Partial Diff. Equations.
35, 1058--1094, (2010).

\bibitem{Landsman} N. P. Landsman: {\it Mathematical Topics Between Classical
and Quantum Mechanics}, Springer-Verlag, New York, 1998.

\bibitem{MP1} M. M\u antoiu and R. Purice, {\it The Magnetic
Weyl Calculus}, J. Math. Phys. 45, No 4 (2004), 1394--1417.

\bibitem{MP2} M. M\u{a}ntoiu and R. Purice, {\it Strict deformation
quantization for a particle in a
magnetic field}, J. Math. Phys. 46 (2005), no. 5, 052105, 15 pp.

\bibitem{MPR1} M. M\u antoiu, R. Purice and S. Richard: {\it Spectral and
Propagation Results for Magnetic Schr\"odinger Operators; a
$C^*$-Algebraic Approach}, J. Funct. Anal. 250, 42--67, (2007).

\bibitem{Ne05} G. Nenciu: {\it On the smoothness of gap boundaries for
generalized Harper operators}. In Advances in operator algebras and
mathematical physics. Theta Ser. Adv. Math., 5, 173–182, Theta, Bucharest,
2005.

\bibitem{Rif1} M. Rieffel: {\it Continuous Fields of $C^*$-Algebras Coming from
Group Cocycles and Actions}, Math. Ann. 283, 631-643, (1989).

\bibitem{Rif2} M. Rieffel: {\it Deformation Quantization for Actions of
$\mathbb{R}^d$}, Memoirs of the AMS, 106, (1993).

\bibitem{LSchwartz} L. Schwartz, {\it Th\'{e}orie des distributions}, Hermann,
Paris, 1966.

\bibitem{Simon1}  B. Simon, \textit{Analysis with weak trace ideals and the
number of bound states of Schr\"{o}dinger operators}, 
Trans. Amer. Math. Soc. 224 (1976), 367-380.

\bibitem{Simon2} B. Simon: Trace Ideals and their Applications, vol.35,
London Mathematical Society Lecture Note Series, Cambridge University Press,
Cambridge London New York Melbourne, 1979.

\bibitem{T} H. Triebel: Interpolation theory, function spaces, differential operators. North-Holland Publishing Company, 1978.

\end{thebibliography}
\end{document}